\documentclass[conference]{IEEEtran}
\IEEEoverridecommandlockouts
\usepackage{cite}
\usepackage{amsmath,amssymb,amsfonts}
\usepackage{graphicx}
\usepackage{textcomp}
\usepackage{xcolor}
\usepackage{amsthm}
\usepackage{overpic}
\usepackage{bm}
\usepackage{overpic}
\usepackage{float}
\usepackage{amsthm}
\usepackage{algorithm,algorithmic}

\usepackage[caption=false,font=footnotesize]{subfig}

\theoremstyle{plain}
\newtheorem{theorem}{Theorem}

\makeatletter
\newcommand\fs@ruled@notop{\def\@fs@cfont{\bfseries}\let\@fs@capt\floatc@ruled
  \def\@fs@pre{}%
  \def\@fs@post{\kern2pt\hrule\relax}%
  \def\@fs@mid{\kern2pt\hrule\kern2pt}%
  \let\@fs@iftopcapt\iftrue}
\renewcommand\fst@algorithm{\fs@ruled@notop}
\makeatother

\begin{document}

\title{Energy-Efficient Dual-Band Communication:\\ How to Allocate Traffic to Sub-THz Carriers?}

\author{\IEEEauthorblockN{Anders Enqvist, Emil Bj{\"o}rnson, Cicek Cavdar}
\IEEEauthorblockA{\textit{Department of Communication Systems, KTH Royal Institute of Technology, SE-100 44 Stockholm, Sweden} \\
Email: enqv@kth.se, emilbjo@kth.se, cavdar@kth.se} \thanks{This work was supported by the Swedish Innovation Agency (Vinnova) through the SweWIN center (2023-00572).}
}

\IEEEpubid{\makebox[\columnwidth]{979-8-3315-6128-4/26/\$31.00~\copyright{}2026 IEEE \hfill} \hspace{\columnsep}\makebox[\columnwidth]{ }}

\maketitle

\begin{abstract}
As 6G wireless networks transition toward sub-Terahertz (sub-THz) frequencies to satisfy extreme capacity demands, managing the trade-off between massive bandwidth and power consumption becomes a critical design challenge. In this paper, we investigate the fundamental energy efficiency (EE) limits of a dual-band base station site combining a coverage-oriented sub-6 GHz carrier with a capacity-oriented sub-THz carrier. By jointly optimizing hardware parameters and advanced sleep modes via activity factors, we identify four distinct operational regions that govern the EE-optimal behavior across the complete range of data rates. We derive closed-form analytical thresholds that dictate precisely when the sub-THz band should awaken from sleep and how to allocate traffic between the bands in that case. Our results demonstrate that utilizing the sub-THz band is EE-optimal when the power cost of the bandwidth-limited sub-6 GHz band surpasses the static power penalty of activating the sub-THz circuitry. Ultimately, this framework provides mathematically rigorous guidelines for power consumption minimization and sleep-mode management in future green networks.
\end{abstract}

\begin{IEEEkeywords}
Energy efficiency, optimization, 6G, multiple antennas, terahertz, dual band, carrier aggregation, sleep modes.%
\end{IEEEkeywords}

\section{Introduction}

The growth in traffic within wireless communication systems, as dictated by Cooper's law, has substantially raised power consumption (PC) in recent years. Mobile operators consumed approximately $290$~TWh of electricity in 2023, with the Radio Access Network (RAN) alone accounting for up to $87\%$ of this ~\cite{kolta2024measuring}. Anticipating continued growth in traffic demands~\cite{ericssonmobilityreport2023}, it is imperative to gradually enhance the energy efficiency (EE), defined as the data rate divided by the related PC \cite{bjornson2017massive}. By using EE as a primary optimization metric, we can design innovative system configurations that promise more sustainable and efficient wireless networks in the future. Much of the technology development has focused on increasing data rates through the introduction of massive MIMO (multiple-input multiple-output) \cite{bjornson2017massive} and increasing bandwidth in mmWave and sub-Terahertz (sub-THz) bands \cite{Rappaport2019a}.
Additionally, emerging technologies such as Reconfigurable Intelligent Surfaces (RIS) \cite{RIS-EE} promise reductions in PC in future networks. A recent survey on how these and other techniques can lead to PC reductions can be found in \cite{lopez2022survey}.

\subsection{Prior Work and Motivations}
Time domain sleep modes have become a staple mechanism for saving power in conventional sub-6 GHz networks \cite{Tombaz2014a,lopez2022survey}. Recent literature has explored single-band EE ``rush-to-sleep'' strategies \cite{rottenberg2024information} and numerical optimization of time, space, and power resources \cite{peschiera2025optimizing}. However, single-band architectures are fundamentally limited by available bandwidth, prompting a transition toward sub-THz frequencies when the very high rates are required \cite{Rappaport2019a}. While dual-band carrier aggregation (CA) is extensively studied as a method to maximize peak throughput \cite{tataria20216g}, and foundational PC models for aggregated carriers have been proposed \cite{lopez2021energy}, its application for PC minimization remains an open problem. Current literature lacks a rigorous analytical framework for coordinating sleep modes across non-contiguous dual-band CA architectures. Specifically, the data rate threshold dictating when a BS should awaken a secondary carrier to relieve a saturated primary band has not yet been established.

\subsection{Contributions}

In this paper, we aim to answer the research questions:

\begin{itemize}
    \item At what rates is it energy-efficient to activate and utilize a secondary sub-THz carrier in a dual-band system?
    \item When both bands are active, how should the traffic be allocated and the BS's hardware be configured in the respective bands to minimize the PC?
\end{itemize}

We answer these questions rigorously through a mathematical analysis that establishes four different operating regions, and then demonstrates the results numerically.

\section{System Model} \label{sec:system_model}
As illustrated in Fig.~\ref{fig:system_model}, we consider how to transfer downlink data between a dual-band base station (BS) and a multi-antenna user equipment (UE) to maximize EE. The BS operates over two widely spaced carriers (with or without shared hardware), denoted by $i \in \{1, 2\}$, representing a conventional sub-6 GHz band and a high-bandwidth sub-THz band. For each band $i$, the BS utilizes $M_i$ transmit antennas, the UE has $N_i$ receive antennas, and the carrier bandwidth is $B_i$. 
We consider a far-field Line-of-Sight (LoS) propagation scenario where the sub-THz band can be used efficiently if a sufficient number of antennas are used at both ends of the link to overcome the severe pathloss per antenna.

Under these conditions, the channel matrix between the BS and the UE has rank one in both bands.

Assuming perfect channel state information (CSI), which is relatively easy to achieve in LoS scenarios, the optimal transmission strategy is to use maximum ratio transmission (MRT) at the BS and maximum ratio combining (MRC) at the UE. This joint beamforming yields a total array gain of $M_i N_i$, and can be achieved using either analog, hybrid, or digital beamforming. Therefore, the data rate during active transmission in band $i$ is
\begin{equation}
    R_i = B_i \log_2 \left( 1 + \frac{P_i M_i N_i \beta_i}{B_i N_0} \right) \quad \textrm{bit/s},
\end{equation}
where $P_i$ is the transmit power, $\beta_i$ is the pathloss coefficient, and $N_0$ is the power spectral density of the receiver noise.

\begin{figure}[t!]
	\centering 
	\begin{overpic}[width=.99\columnwidth,tics=10]{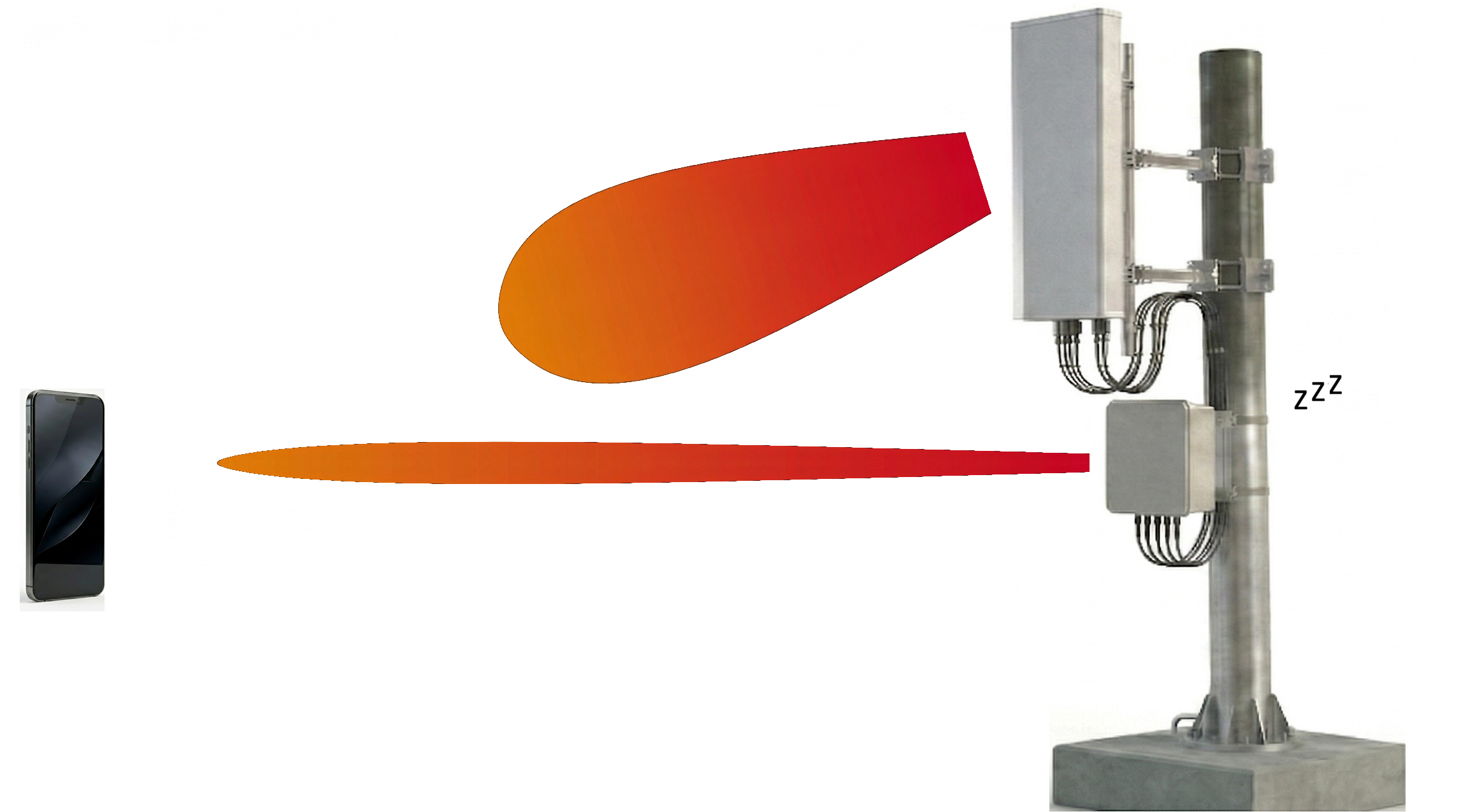}
		\put(2,10.5){\footnotesize UE}
		\put(89,10.5){\footnotesize  BS}
		\put(45,46){\footnotesize $\beta_1$}
        \put(45,18){\footnotesize $\beta_2$}
\end{overpic}  \vspace{-2mm}
	\caption{System model of the dual-band base station downlink. To maximize energy efficiency, the multi-antenna BS coordinates transmissions to a UE across two frequency bands, utilizing sleep modes and waking the secondary band only as needed. The beams represent beamforming gain.}  \vspace{-3mm}
\label{fig:system_model}
\end{figure}

Suppose the UE requires $R$ bit/s. Under bursty or stochastic traffic, it represents the average throughput over a specific scheduling interval. This rate can be divided across the two bands in many different ways so that $R=R_1+R_2$. We seek to find the solution that maximizes EE, which corresponds to minimizing the BS's total PC. To this end, we will treat $P_i$ and $M_i$ as optimization variables for $i \in \{1, 2\}$, allowing us to control the utilized power and active antennas subject to constraints to be defined later.\footnote{The number of receive antennas $N_i$ is constant since we only measure the BS power consumption, as it dominates the total power consumption.}
Note that $P_i M_i=0$ corresponds to a deactivation of band $i$ (i.e., sleep mode).

\subsection{Power Consumption}

To optimize the system, we need a detailed model of the BS's PC that depends on our optimization variables and essential hardware coefficients.
When actively transmitting using band $i$, the associated PC is modeled as \cite{enqvist2024fundamentals}\footnote{The expression could also contain the term $\eta_i R_i$, where $\eta_i$ is a coefficient that determines the PC related to signal encoding. However, as established in our previous work \cite{enqvist2024fundamentals}, this term does not alter the EE-maximizing hardware configuration and has little impact on the total PC. To keep the notation simple, we have omitted this term from the optimization objectives in this paper.}
\begin{equation} \label{eq:Pai}
P_{a,i} = \frac{P_i}{\kappa_i} + \mu_i + (D_{0,i} + \nu_i B_i)M_i,
\end{equation}
where $\kappa_i \in (0, 1]$ is the PA efficiency and $\mu_i$ represents the fixed load-independent power required for circuitry and synchronization.  Moreover, $D_{0,i}$ is the power consumed by each transceiver chain, and $\nu B_i$ represents the signal-processing-related PC that scales with the sampling rate.

Depending on the rate requirement, the BS might switch between active and sleep modes to minimize the PC. 
In the sleep mode, the BS only consumes the passive circuit power required to keep the band powered, meaning the PC during sleep mode in band $i$ is 
\begin{equation} \label{eq:P_S}
    P_{s,i} = \mu_i.
\end{equation}

We let $\alpha_i \in [0, 1]$ denote the activity factor (or duty cycle) defined as the proportion of time that band $i$ is actively used for transmission, and treat it as an optimization variable \cite{peesapati2021analytical}. The BS stays in sleep mode for the remaining $(1-\alpha_i)$ fraction of the time. Consequently, the average PC for band $i$ is
\begin{equation} \label{eq:Pavg}
P_{\mathrm{avg},i} = \alpha_i P_{a,i} + (1 - \alpha_i)P_{s,i}.
\end{equation}

Because the two bands operate at widely separated frequencies and use distinct hardware components, their PC models are decoupled.
Hence, we can treat the bands independently at the physical layer while jointly managing their activity factors and active hardware configurations ($\alpha_i,P_i,M_i$), to determine the optimal EE configurations for the entire dual-band system.

\section{Dual-Band Power Consumption Minimization}

This paper considers the novel problem of minimizing the total PC when having two non-contiguous frequency bands (i.e., a sub-6 GHz carrier and a sub-THz carrier).
Before treating this dual-band problem, we will recap known results from previous works on single-band EE maximization.

\subsection{Preliminaries on Single-Band EE Maximization}

Our previous works \cite{enqvist2024fundamentals,Enqvist_Fundamentals_TBS} established that, in practice, the PC in a single band is minimized by using all the available bandwidth and a configuration $(P_i^*, M_i^*)$ that satisfies the optimal power-per-antenna ratio
\begin{equation} \label{eq:optimal_ratio}
M_i^* = \frac{P_i^*}{\kappa_i(D_{0,i} + \nu B_i)},
\end{equation}
which makes the first and last terms in \eqref{eq:Pai} equally large. The transmit power $P^*_i$ is then given by 
\begin{equation} \label{eq:Pstar}
    P^*_i=\sqrt{\frac{u^* - 1}{c_i}},
\end{equation}
where $u^* = \frac{-2}{W_0( -2 e^{-2} )}\approx 4.9216$ is a numerical constant, $W_0(\cdot)$ denotes the Lambert function, and

\begin{equation}
    c_i = \frac{N_i\beta_i}{\kappa_i (D_{0,i} + \nu_i B_{i}) B_{i} N_0}.
\end{equation}

When operating continuously at this optimal configuration, we achieve the following data rate and EE:
\begin{align} \label{eq:optimal_rate}
    R_{i}^* &=B_i \log_2 \left( 1 + \frac{P_i^* M_i^* N_i \beta_i}{B_i N_0} \right)=B_i \log_2(u^*), \\
    \mathrm{EE}_{i,\mathrm{opt}} & = \frac{R_{i}^*}{2\frac{P_i^*}{\kappa_i} + \mu_i}.
\end{align}

If we only use this band and the requested data rate $R$ is lower than $R_{i}^*$, then the PC is minimized by having an activity factor $\alpha_i = R/R_{i}^*$ and using the sleep mode during the rest of the time. This allows the average PC to be minimized by scaling it linearly with the required rate up to the value $R_i^*$. Building upon these single-band foundations, we now introduce our novel framework for dual-band EE.

\subsection{Problem Formulation: When to Turn on the Second Band?}

The two bands will generally have different optimal EE values: $\mathrm{EE}_{1,\mathrm{opt}}\neq \mathrm{EE}_{2,\mathrm{opt}}$.
In practical scenarios, due to the more favorable propagation conditions and higher hardware efficiency in the sub-6 GHz band, we assume that the coefficients $\kappa_i,D_{0,i},\beta_i,\nu_i$ are such that $\mathrm{EE}_{1,\mathrm{opt}}> \mathrm{EE}_{2,\mathrm{opt}}$.
This implies that when the requested data rate $R$ is relatively low, it is most efficient to only use the sub-6 GHz carrier.
In particular, if the data rate $R \leq R_{1}^*$, we say that the BS operates in \textbf{Region 1} and only uses Band~1 with an activity factor of $\alpha_1 = \min (1,R/R_{1}^*)$. For larger rates, Band~1 will ultimately be limited by its narrow bandwidth. 
This raises the fundamental question:

\emph{When minimizing the PC in a dual-band system, at what data rate threshold is it optimal to turn on the second (sub-THz) band and use both bands simultaneously?}

The answer can be found by solving the following total power minimization problem:
\begin{equation}
\begin{aligned}
\underset{\alpha_1,\alpha_2,P_1,P_2,M_1,M_2}{\text{minimize}} \,\,\, & \sum_{i=1}^{2} P_{\mathrm{avg},i}  \\
\text{subject to} \quad \quad & \sum_{i=1}^{2} \alpha_i R_i= R \\
& 0 \leq P_i , \,0 \leq M_i ,\,  0\le \alpha_i \le 1,
\end{aligned}
\end{equation}
where the first constraint ensures that the required rate is achieved, the second constraint ensures non-negative power and antenna allocations, and the third constraint is for the activity factors.

This problem can be simplified using the preliminary results from the single-band case.
By substituting the optimal power-per-antenna ratio from \eqref{eq:optimal_ratio} into $P_{a,i}$, we can eliminate the optimization variable $M_i$ and the average PC for band $i$ simplifies to $P_{\mathrm{avg},i} = \alpha_i \frac{2P_i}{\kappa_i} + \mu_i$.
This reduction yields the following simplified power allocation problem:

\begin{equation} \label{eq:quad_power_allocation}
\begin{aligned}
\underset{\alpha_1,\alpha_2,P_1,P_2}{\text{minimize}}  \quad & \sum_{i=1}^{2}  \alpha_i \frac{2P_i}{\kappa_i} + \mu_i  \\
\text{subject to} \, \quad & \sum_{i=1}^{2} \alpha_i B_{i} \log_2 \left( 1 + c_i P_i^2  \right) = R ,\\
& 0 \leq P_i, \,0 \le \alpha_i \le 1. 
\end{aligned}
\end{equation}

Solving this problem is equivalent to maximizing the EE of the dual-band BS, defined as
\begin{equation} \label{eq:EE-dualband}
\mathrm{EE}_{\text{dual}}  = \frac{\alpha_1 B_{1} \log_2 \left( 1 + c_1 P_1^2 \right) + \alpha_2 B_{2} \log_2 \left( 1 + c_2 P_2^2 \right)}{\alpha_1 \frac{2 P_1}{\kappa_1}  + \alpha_2  \frac{2 P_2}{\kappa_2} +\mu_1 + \mu_2},
\end{equation}
under the given rate constraint.
A first major step towards solving the problem is the following main result.

\begin{theorem} \label{theorem1}
Suppose the maximum EE in Band~1 is higher than in Band~2: $\mathrm{EE}_{1,\mathrm{opt}}> \mathrm{EE}_{2,\mathrm{opt}}$. To maximize the dual-band EE in \eqref{eq:EE-dualband}, we should activate the secondary band (i.e.,  $\alpha_2 > 0$) if and only if the data rate requirement $R$ exceeds the threshold $R_{\mathrm{bp}}$ given by
\begin{equation} \label{eq:Rbp}
R_{\mathrm{bp}}=B_1\log_2(1+c_1P_{1,\mathrm{bp}}^2).
\end{equation}
The transmit power $P_{1,\mathrm{bp}}$ in Band~1 at this threshold is 
\begin{equation} \label{eq:P1bp}
P_{1,\mathrm{bp}}=\left(\frac{\kappa_1B_1\sqrt{u^*-1}}{\kappa_2B_2\sqrt{c_2}\ln(u^*)}\right)+\sqrt{\left(\frac{\kappa_1B_1\sqrt{u^*-1}}{\kappa_2B_2\sqrt{c_2}\ln(u^*)}\right)^2-\frac{1}{c_1}}
\end{equation}
and $\alpha_1=1$.
\end{theorem}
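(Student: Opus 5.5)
The plan is to reduce the problem to a one-dimensional comparison of marginal power costs per bit. The sleep power $P_{s,i}=\mu_i$ equals the load-independent term in $P_{a,i}$, so after the reduction leading to \eqref{eq:quad_power_allocation} both $\mu_1$ and $\mu_2$ are paid regardless of the activity factors. The problem therefore amounts to minimizing $\sum_i \alpha_i 2P_i/\kappa_i$ subject to the rate constraint. For each band I will define the minimal dynamic power $f_i(r)$ needed to deliver an average rate $r$ when that band works alone; the dual-band problem then becomes $\min_{R_1+R_2=R} f_1(R_1)+f_2(R_2)$.

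First, I will characterize each $f_i$ using the single-band preliminaries.
\begin{itemize}
\item For $r\le R_i^*$, the optimum is $P_i=P_i^*$ with $\alpha_i=r/R_i^*$. Hence $f_i$ is linear with slope $s_i = 2P_i^*/(\kappa_i B_i\log_2 u^*)$.
\item For $r>R_i^*$, we need $\alpha_i=1$ and $P_i=\sqrt{(2^{r/B_i}-1)/c_i}$, so $f_i(r)=\frac{2}{\kappa_i}\sqrt{(2^{r/B_i}-1)/c_i}$.
\end{itemize}
Differentiating with respect to $P$ and dividing by $dR/dP$ gives the marginal cost
\begin{equation*}
f_i'=\frac{(1+c_iP_i^2)\ln 2}{\kappa_i B_i c_i P_i}.
\end{equation*}
This expression is increasing in $P_i$ for $P_i>1/\sqrt{c_i}$, and that condition holds on this branch because $c_iP_i^{*2}=u^*-1>1$. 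At $P_i^*$ it coincides with $s_i$, by the first-order optimality of $P_i^*$ (the Lambert-$W$ condition defining $u^*$). Hence each $f_i$ is convex and $C^1$. I will then interpret the hypothesis $\mathrm{EE}_{1,\mathrm{opt}}>\mathrm{EE}_{2,\mathrm{opt}}$ as $s_1<s_2$, i.e., band~1 delivers bits more cheaply at its optimum. This is the step I expect to need care, since the stated EE values include $\mu_i$; I would argue that the ordering of the dynamic per-bit costs is the relevant one, or state it as the working interpretation.

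Next, I will apply the standard optimality conditions for minimizing a sum of convex functions under a sum constraint. Since $f_2$ is linear with slope $s_2$ near zero, the optimum has $R_2>0$ if and only if $f_1'(R)>s_2$. For $R\le R_1^*$ we have $f_1'=s_1<s_2$, so band~2 stays asleep, which recovers Region~1. Beyond $R_1^*$, $f_1'$ increases continuously with $\alpha_1=1$, so the breakpoint is the unique $P_{1,\mathrm{bp}}>P_1^*$ solving $f_1'(P_{1,\mathrm{bp}})=s_2$. Substituting $P_2^*=\sqrt{(u^*-1)/c_2}$ into $s_2$ gives
\begin{equation*}
s_2=\frac{2\sqrt{u^*-1}\,\ln 2}{\kappa_2B_2\sqrt{c_2}\ln u^*}.
\end{equation*}
The breakpoint equation then becomes the quadratic $c_1P^2-2Ac_1P+1=0$, where $A=\frac{\kappa_1B_1\sqrt{u^*-1}}{\kappa_2B_2\sqrt{c_2}\ln u^*}$.

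The larger root $P=A+\sqrt{A^2-1/c_1}$ is the relevant one, because $f_1'$ is increasing only for $P>1/\sqrt{c_1}$ and the smaller root lies below $1/\sqrt{c_1}$. This gives \eqref{eq:P1bp}, and $R_{\mathrm{bp}}=B_1\log_2(1+c_1P_{1,\mathrm{bp}}^2)$ gives \eqref{eq:Rbp}. Existence of real roots follows from $s_2>s_1=\min_P f_1'$, and this minimum is at most $f_1'$ at $P=1/\sqrt{c_1}$, namely $2\ln 2/(\kappa_1B_1\sqrt{c_1})$; so $A\ge 1/\sqrt{c_1}$. Since $P_{1,\mathrm{bp}}>P_1^*$, we also get $\alpha_1=1$ at the threshold. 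Monotonicity of $f_1'$ then gives the ``if and only if'': for $R>R_{\mathrm{bp}}$, shifting a small rate $\epsilon$ to band~2 strictly lowers the power, and for $R\le R_{\mathrm{bp}}$ no shift helps.
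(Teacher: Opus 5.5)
Your proposal is correct and follows the paper's route: you equate Band~1's marginal cost on its $\alpha_1=1$ branch with Band~2's constant duty-cycled marginal cost $s_2$ and take the larger root of the resulting quadratic. Along the way you add rigor the paper omits, namely $C^1$ convexity of $f_i$, the ``if and only if'' via the separable convex optimality condition, and the root selection. Your caution about the hypothesis is also justified: since $\mathrm{EE}_{i,\mathrm{opt}}=1/(s_i+\mu_i/R_i^*)$ and the $\mu_i$ are paid regardless of $\alpha_i$, the condition $\mathrm{EE}_{1,\mathrm{opt}}>\mathrm{EE}_{2,\mathrm{opt}}$ does not imply $s_1<s_2$, so you should state $s_1<s_2$ as the working assumption; the paper's proof uses it tacitly when it asserts that Band~1 alone is optimal for $R\le R_1^*$.

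One small fix in the discriminant step: the inequality runs the other way. The value $f_1'(1/\sqrt{c_1})=2\ln 2/(\kappa_1B_1\sqrt{c_1})$ is the global minimum of $f_1'$ over $P>0$, so $s_2>s_1\ge 2\ln 2/(\kappa_1B_1\sqrt{c_1})$, which gives $A>1/\sqrt{c_1}$. Alternatively, existence of a real root follows directly from the intermediate-value argument on $P>P_1^*$.
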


\begin{IEEEproof}
Since $\mathrm{EE}_{1,\mathrm{opt}}> \mathrm{EE}_{2,\mathrm{opt}}$, the PC is minimized by only using Band~1 for $R \leq R_{1}^*$ and selecting $\alpha_1 = R/R_{1}^*$. Hence, the threshold $R_{\mathrm{bp}}$ must be larger than $R_{1}^*$, in which case the primary band reaches continuous operation ($\alpha_1=1$) and we must increase the transmit power $P_1$ beyond the optimal duty-cycled power level $P_1^*$.

To deliver a specific data rate $R_1>R_{1}^*$ in Band~1 with minimal PC, we need to solve the equation
$R_1 = B_{1} \log_2 \left( 1 + c_1 P_1^2  \right)$ for $P_1$, resulting in
\begin{equation} \label{eq:region2}
    P_1(R_1) = \sqrt{c_1^{-1}(2^{R_1 / B_{1}} - 1)}.
\end{equation}
The PC of Band~1 is $P_{\mathrm{avg},1}(R_1) = 2 P_1(R_1) / \kappa_1+\mu_1$. Taking the derivative of this PC expression with respect to $R_1$ yields
\begin{equation}
\frac{dP_{\mathrm{avg},1}}{dR_1} = \frac{2}{\kappa_1} \frac{d}{dR_1} \sqrt{\frac{2^{R_1 / B_{1}} - 1}{c_1}} = \frac{(1 + c_1 P_1^2) \ln (2)}{\kappa_1 B_{1} c_1 P_1},
\end{equation}
and it indicates how quickly the PC grows if we increase $R_1$.

It is desirable to turn on the second band when the marginal increase in PC is smaller in Band~2 than in Band~1.
The operation in Band~2 will begin using the EE-optimal operating point.
Recall that $P_2^* = \sqrt{(u^* - 1)/c_2}$, as defined in \eqref{eq:Pstar}. 
Because Band~2 is operated by increasing the activity factor from $0$ towards $1$, the  average PC and the provided data rate in Band~2 scale linearly with the activity factor $\alpha_2$:
\begin{align}
    P_{\mathrm{avg},2}(\alpha_2) &= \alpha_2 \frac{2P_2^*}{\kappa_2} + \mu_2, \label{eq:P_avg_alpha} \\
    \alpha_2(R_2) &= \frac{R_2}{B_2 \log_2(1+c_2(P_2^*)^2)}  , \label{eq:R_alpha}
\end{align}
where $R_2$ denotes the rate in Band~2.

Therefore, using the chain rule to calculate, the derivative of $P_{\mathrm{avg},2}(\alpha_2)$ with respect to $R_2$ yields
\begin{align} \nonumber
    \frac{dP_{\mathrm{avg},2}}{dR_2} &= \frac{dP_{\mathrm{avg},2}}{d\alpha_2} \cdot \frac{d\alpha_2}{dR_2} = \frac{2P_2^*/\kappa_2}{B_2\log_2(1+c_2(P_2^*)^2)} \\
    &=\frac{2\sqrt{u^*-1}}{\kappa_2B_{2}\sqrt{c_2}\log_2(u^*)},
\end{align}
where the last expression follows by replacing $P_2^*$ with the expression in \eqref{eq:Pstar}.

Equating the derivatives ($\frac{dP_{\mathrm{avg},1}}{dR_1} = \frac{dP_{\mathrm{avg},2}}{dR_2}$) for the two bands gives
\begin{equation}
\frac{(1+c_1P_{1,\mathrm{bp}}^2)\ln (2)}{\kappa_1B_{1}c_1P_{1,\mathrm{bp}}}=\frac{2\sqrt{u^*-1}}{\kappa_2B_{2}\sqrt{c_2}\log_2(u^*)}.
\end{equation}
Using that $\ln(2)\log_2(u^*) = \ln(u^*)$, we rearrange the terms into a standard quadratic equation 
\begin{equation}
c_1P_{1,\mathrm{bp}}^2-2\left(\frac{\kappa_1B_{1}c_1\sqrt{u^*-1}}{\kappa_2B_{2}\sqrt{c_2}\ln(u^*)}\right)P_{1,\mathrm{bp}}+1=0.
\end{equation}
Solving it directly yields the closed-form expression for $P_{1,\mathrm{bp}}$ in \eqref{eq:P1bp}. Substituting this power value back into the rate expression of Band~1 yields the rate threshold $R_\mathrm{bp}$, concluding the proof.
\end{IEEEproof}

A consequence is that for rates $R_1^* \leq R \leq R_{\mathrm{bp}}$, the BS operates in \textbf{Region 2}. In this region, Band~1 serves all traffic, and the power allocation in \eqref{eq:region2} shows that the PC scales exponentially with the rate requirement.
Theorem~\ref{theorem1} proves that once the target data rate $R$ exceeds the activation threshold $R_\mathrm{bp}$ in \eqref{eq:Rbp}, the system enters a dual-band operating regime. Band~1 (sub-6 GHz) operates continuously ($\alpha_1 = 1$) at the fixed transmit power $ P_{1,\mathrm{bp}}$ in \eqref{eq:P1bp}, providing a constant data rate contribution of exactly $R_{\mathrm{bp}}$.
The residual data rate, defined as $R_2 = R - R_{\mathrm{bp}}$, is offloaded to Band~2 (sub-THz), and managed by gradually increasing the activity factor $\alpha_2$.
The transmit power is fixed at $P_2^*$ given by \eqref{eq:Pstar}, which gives the highest EE during data transmission, and the rate provided by Band~2 while actively transmitting is $B_2 \log_2(u^*)$ as established in \eqref{eq:optimal_rate}. 
The activity factor $\alpha_2$, required to meet this rate in Band~2, is 
\begin{equation}
    \alpha_2(R) = \frac{R_2}{B_2 \log_2(u^*)}.
\end{equation}

The total average PC of the BS in this dual-band operation is the sum of the fully active Band~1 power and the duty-cycled Band~2 power:
\begin{equation}
    P_{\mathrm{avg}}(R) = \left( \frac{2 P_{1,\mathrm{bp}}}{\kappa_1} + \mu_1 \right) + \left( \alpha_2(R) \frac{2 P_2^*}{\kappa_2} + \mu_2 \right).
\end{equation}

\begin{table}[t!]
  \begin{center}
    \caption{Simulation Parameters}
    \label{tab:table1}
    \begin{tabular}{|l|r|} 
     \hline
      \textbf{Parameter} & \textbf{Value} (Band~1, Band~2) \\
      \hline
      Passive circuit power $\mu_1, \mu_2$ & $1000$\,mW, $1000$\,mW \\
      Transceiver chain power: $D_{0,1} \, D_{0,2}$ & $100$\,mW, $10$\,mW  \\
      Sample processing energy: $\nu_1,\nu_2$ & $10^{-10}, 10^{-11}$\,J/sample \\
      Power amplifier efficiency: $\kappa_1, \kappa_2$ & 0.4, 0.05 \\
      Carrier bandwidths $B_1, B_2$ & $100$\,MHz, $400$\,MHz \\
      Receiver noise power spectral density: $N_0$ & $-174$\,dBm/Hz \\
      Channel gain (Band~1, Band~2): $\beta_1, \beta_2$ & $-126$\,dB, $-153$\,dB \\
      Receive antennas  $N_1, N_2$ & $4$, $64$ \\
       \hline
    \end{tabular}
  \end{center}
\end{table}

Theorem~\ref{theorem1} identifies the rate $R_{\mathrm{bp}}$ where Band~2 is turned on and how the residual rate $R-R_{\mathrm{bp}}$ is handled by gradually increasing $\alpha_2 \in (0,1]$.
However, we reach a saturation data rate, $R_{\mathrm{sat}}$, when Band~2 reaches continuous operation ($\alpha_2 = 1$). This rate is
\begin{equation}
    R_{\mathrm{sat}} = R_{\mathrm{bp}} + B_{2} \log_2(u^*).
\end{equation}
Hence, we identify \textbf{Region 3} where $R_{\mathrm{bp}}< R \leq R_{\mathrm{sat}}$. In this region, the allocated transmit power on both carriers remains constant ($P_{1,\mathrm{bp}}$ and $P_2^*$). To illustrate these analytical findings, in Fig.~\ref{fig:dualbandPC} we show how the total PC increases with the required data rate using the simulation parameters in Table~\ref{tab:table1}. The curve demonstrates a linear scaling in Region 1, an exponential scaling in Region 2, and back to linear scaling in Region 3 as the sub-THz band absorbs the residual traffic. The corresponding optimal activity factors ($\alpha_1$ and $\alpha_2$) for the required data rate $R$ are shown in Fig.~\ref{fig:activityfactors}.

\begin{figure}[t!]
    \centering
    \subfloat[Power Consumption for different rate requirements.]{
        \includegraphics[width=\columnwidth]{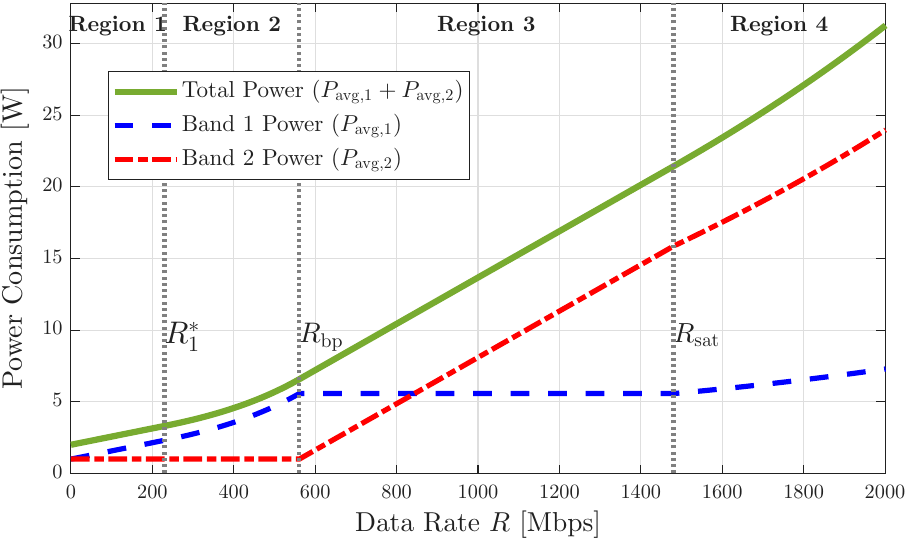}
        \label{fig:dualbandPC}
    }
    \\ \vspace{-2mm} 
    \subfloat[Optimal activity factors $\alpha_1$ and $\alpha_2$ for different rate requirements.]{
        \includegraphics[width=\columnwidth]{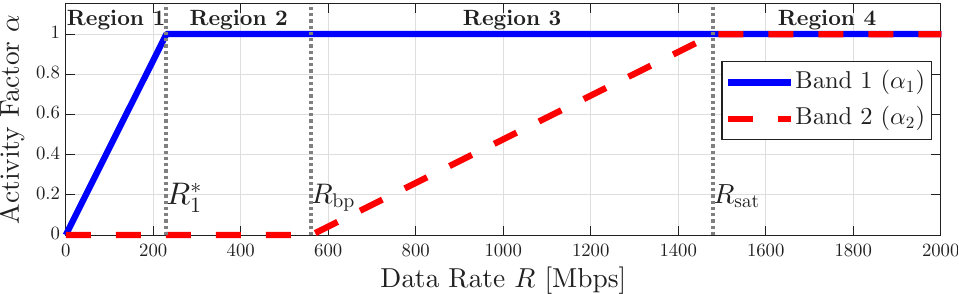}
        \label{fig:activityfactors}
    }
    \caption{Dual-band base station power consumption (a) and activity factor (b) configuration across varying data rate requirements.}
    \label{fig:combined_performance}
    \vspace{-4mm}
\end{figure}

\begin{table*}[t]
\centering
\caption{Dual-Band Operation Regions for Band~1 and Band~2 Carriers.}
\label{tab:dualband_regions}
\begin{tabular}{|c|c|c|c|c|}
\hline
\textbf{Region} & \textbf{Rate Range $R$} & \textbf{Band~1 Status} & \textbf{Band~2 Status} & \textbf{Total PC Scaling} \\
\hline
1 & $0 < R \leq R_{1}^*$ & Partially Active ($\alpha_1 < 1$) & Off ($\alpha_2=0$) & Linear (Band~1 duty-cycled) \\
\hline
2 & $R_1^* < R \leq R_{\mathrm{bp}}$ & Fully active ($\alpha_1=1$) & Off ($\alpha_2=0$) & Exponential (Band~1 saturated) \\
\hline
3 & $R_{\mathrm{bp}} < R \leq R_{\mathrm{sat}} $ & Fully active ($\alpha_1=1$) & Partially Active $(\alpha_2<1)$ & Linear (Band~2 duty-cycled) \\
\hline
4 & $R > R_{\mathrm{sat}}$ & Fully active ($\alpha_1=1$) & Fully active ($\alpha_2=1$) & Exponential (Both saturated) \\
\hline
\end{tabular}
\end{table*}

\subsection{Optimal Traffic Allocation in Very High-Rate Scenarios}
When the required data rate exceeds the saturation threshold $R_{\mathrm{sat}}$, the system enters \textbf{Region 4}. As shown in Fig.~\ref{fig:dualbandPC}, both bands must remain fully active continuously (i.e., $\alpha_1 = 1$ and $\alpha_2 = 1$), and we must allocate power that deviates from the most energy-efficient operating point in both bands. In this region, the general optimization problem in \eqref{eq:quad_power_allocation} simplifies to
\begin{equation}
\begin{aligned}
    \underset{P_1,P_2}{\text{minimize}} \quad & \sum_{i=1}^{2}  \frac{2P_i}{\kappa_i} + \mu_i \\
    \text{subject to} \quad & B_{1} \log_2 \left( 1 + c_1 P_1^2 \right) + B_{2} \log_2 \left( 1 + c_2 P_2^2 \right) = R, \\
    & P_1 \ge P_{1,\mathrm{bp}}, \quad P_2 \ge P_2^*.
\end{aligned}
\end{equation}

To determine the optimal power allocation $(P_1, P_2)$ across the two bands, we formulate the Lagrangian function:
\begin{align} \nonumber
    &\mathcal{L}(P_1, P_2, \lambda) = \frac{2 P_1}{\kappa_1} + \frac{2 P_2}{\kappa_2} +\mu_1+\mu_2\\ &- \lambda \Big( B_{1} \log_2(1 + c_1 P_1^2) + B_{2} \log_2(1 + c_2 P_2^2) - R \Big),
\end{align}
where $\lambda$ is the Lagrange multiplier associated with the sum-rate constraint. The first-order Karush-Kuhn-Tucker (KKT) optimality conditions are obtained by taking the partial derivatives of $\mathcal{L}$ with respect to $P_1$ and $P_2$, and equating them to zero:
\begin{align}
    \frac{\partial \mathcal{L}}{\partial P_1} &= \frac{2}{\kappa_1} - \lambda \left( \frac{B_{1}}{\ln (2)} \frac{2 c_1 P_1}{1 + c_1 P_1^2} \right) = 0, \\
    \frac{\partial \mathcal{L}}{\partial P_2} &= \frac{2}{\kappa_2} - \lambda \left( \frac{B_{2}}{\ln (2)} \frac{2 c_2 P_2}{1 + c_2 P_2^2} \right) = 0.
\end{align}
By isolating the Lagrange multiplier $\lambda$ in both KKT equations, we reveal that the optimal power allocation mandates
\begin{equation} \label{eq:lambdaequation}
    \lambda = \frac{(1 + c_1 P_1^2) \ln (2)}{\kappa_1 B_{1} c_1 P_1} = \frac{(1 + c_2 P_2^2) \ln (2)}{\kappa_2 B_{2} c_2 P_2}.
\end{equation}

As a consequence, the optimal power allocation $(P_1,P_2)$ for any $R > R_{\mathrm{sat}}$ is defined by the solution to the following system of nonlinear equations:
\begin{equation}
    \begin{cases} 
\frac{1 + c_1 P_1^2}{\kappa_1 B_{1} c_1 P_1} = \frac{1 + c_2 P_2^2}{\kappa_2 B_{2} c_2 P_2}\label{eq:nonlin1} \\
B_{1} \log_2(1 + c_1 P_1^2) + B_{2} \log_2(1 + c_2 P_2^2) = R.
\end{cases}
\end{equation}

A closed-form analytical expression for $P_1(R)$ and $P_2(R)$ cannot be found. However, the optimal power allocation in this regime can be evaluated using standard numerical root-finding algorithms, such as a line search over $\lambda$, followed by solving \eqref{eq:lambdaequation} as a quadratic equation.
Nevertheless, the expressions in \eqref{eq:nonlin1} can be used to study the asymptotic power allocation for extremely high data rates.

\begin{theorem}
In the fully active dual-band regime ($\alpha_1 = 1, \alpha_2 = 1$), as the target data rate $R \to \infty$, the optimal power allocation between the two bands becomes independent of the channel conditions and has the following asymptotically optimal ratio:
\begin{equation}
    \lim_{R \to \infty} \frac{P_1(R)}{P_2(R)} = \frac{\kappa_1 B_1}{\kappa_2 B_2}.
    \end{equation}
\end{theorem}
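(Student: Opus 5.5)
We work directly from the stationarity system \eqref{eq:nonlin1}, which holds for every $R>R_{\mathrm{sat}}$. The plan has three steps: first show that both optimal powers diverge as $R\to\infty$, then divide the first equation through by $P_1P_2$ and let the $1/P_i^2$ terms vanish, and finally pass to the limit.

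\textbf{Step 1: both powers diverge.} Introduce $f_i(P)=\frac{1+c_iP^2}{\kappa_i B_i c_i P}=\frac{1}{\kappa_iB_i}\left(P+\frac{1}{c_iP}\right)$.

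\begin{itemize}
\item Each $f_i$ is strictly increasing for $P>1/\sqrt{c_i}$.
\item This range contains the feasible region of Region~4. Indeed $P_2\ge P_2^*=\sqrt{(u^*-1)/c_2}>1/\sqrt{c_2}$, since $u^*-1>1$.
\item In the same way, $P_{1,\mathrm{bp}}>1/\sqrt{c_1}$. This is because $P_{1,\mathrm{bp}}$ is the larger root of a quadratic whose roots multiply to $1/c_1$.
\end{itemize}

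The first equation of \eqref{eq:nonlin1} reads $f_1(P_1)=f_2(P_2)$. It therefore defines $P_2=g(P_1)=f_2^{-1}(f_1(P_1))$, and $g$ is strictly increasing.

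The rate constraint $B_1\log_2(1+c_1P_1^2)+B_2\log_2(1+c_2P_2^2)=R$ then has a left-hand side that is strictly increasing in $P_1$. It follows that $P_1(R)$ is increasing, and it is unbounded: if $P_1$ stayed bounded, then $P_2=g(P_1)$ would also stay bounded, and the rate could not grow to infinity. Hence $P_1(R)\to\infty$. Since $f_1(P_1)\to\infty$, we also get $P_2(R)=g(P_1(R))\to\infty$.

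\textbf{Step 2: rewrite the equality.} Write $f_1(P_1)=f_2(P_2)$ as
\begin{equation*}
\frac{P_1\left(1+\frac{1}{c_1P_1^2}\right)}{\kappa_1B_1}=\frac{P_2\left(1+\frac{1}{c_2P_2^2}\right)}{\kappa_2B_2},
\end{equation*}
which gives
\begin{equation*}
\frac{P_1}{P_2}=\frac{\kappa_1B_1}{\kappa_2B_2}\cdot\frac{1+\frac{1}{c_2P_2^2}}{1+\frac{1}{c_1P_1^2}}.
\end{equation*}

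\textbf{Step 3: pass to the limit.} By Step~1, both correction factors $1+\frac{1}{c_iP_i^2}$ tend to $1$ as $R\to\infty$. This yields the claimed limit $\kappa_1B_1/(\kappa_2B_2)$. The limit contains no $c_i$ and hence no $\beta_i$ or $N_i$, which is the independence from channel conditions asserted in the statement.

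\textbf{Main obstacle.} The delicate step is Step~1, namely justifying that both powers diverge rather than one band absorbing all of the rate growth. The monotonicity of $f_i$ on the feasible set is what rules this out. An alternative route is to use $\lambda$ from \eqref{eq:lambdaequation}. Each $P_i$ is the larger root of $c_iP_i^2-\lambda\kappa_iB_ic_iP_i/\ln 2+1=0$, so $P_i=\frac{\lambda\kappa_iB_i}{\ln 2}(1+o(1))$ as $\lambda\to\infty$. Moreover $\lambda\to\infty$, because otherwise both $P_i$ would be bounded and the rate $R$ would stay bounded. Taking the ratio of the two roots gives the result again.
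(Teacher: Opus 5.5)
Your proof is correct and follows essentially the same route as the paper. Both isolate $P_1/P_2$ from the first equation of \eqref{eq:nonlin1} (you write it multiplicatively, the paper additively) and let the correction terms vanish as the powers grow; your Step~1, which uses the monotonicity of $f_i$ on $P>1/\sqrt{c_i}$ to show that $P_1(R)$ and $P_2(R)$ both diverge, supplies a justification the paper only asserts.
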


\begin{IEEEproof}
    We can rewrite the first equation in \eqref{eq:nonlin1} as
\begin{align} \nonumber
     &\frac{1}{\kappa_1 B_{1} c_1 P_1} + \frac{P_1}{\kappa_1 B_{1}} = \frac{1}{\kappa_2 B_{2} c_2 P_2} + \frac{P_2}{\kappa_2 B_{2}} \\
     &\Rightarrow \frac{P_1}{P_2} = \frac{\kappa_1 B_1}{\kappa_2 B_2}+\frac{\kappa_1 B_{1}}{\kappa_2 B_{2} c_2 P_2^2}  - \frac{\kappa_1 B_{1}}{\kappa_1 B_{1} c_1 P_1P_2} . \label{eq:ratio-derivation}
\end{align}
As the rate constraint $R \to \infty$, the required transmission powers $P_1, P_2 \to \infty$. Consequently, the last two terms in \eqref{eq:ratio-derivation} go to zero in the asymptotic regime, and the first term remains as the asymptotic limit.
\end{IEEEproof}

This theorem demonstrates that, in the extremely high rate regime, the optimal transmit power distribution across the two bands is determined exclusively by the PA efficiencies and bandwidths: $P_i = \kappa_i B_i \rho$ for some value of $\rho$. An overview of the optimal activity factors, and power consumption scaling across all four operational regions is provided in Table~\ref{tab:dualband_regions}.

\section{Secondary Band Activation Threshold}
In the preceding analysis, the dual-band PC model incurs the passive circuit power ($\mu_1$ and $\mu_2$) of both bands. To precisely identify when the proposed dual-band architecture becomes more energy-efficient than a single-band baseline, we must determine the crossover rate $R_\mathrm{cross}$. By equating the PC of the two architectures, we establish this formally in the following theorem.

\begin{theorem}
\label{thm:activation_threshold}
The data rate threshold $R_{\mathrm{cross}}$ at which the proposed dual-band architecture strictly outperforms a single-band baseline in terms of EE is obtained by numerically solving the following equation for $R$:
\begin{equation}
\frac{2}{\kappa_1 \sqrt{c_1}}\sqrt{2^{R/B_1}-1} = \frac{2 P_{1,\mathrm{bp}}}{\kappa_1} + \frac{R - R_{\mathrm{bp}}}{R_2^*} \left( \frac{2 P_2^*}{\kappa_2} \right) + \mu_2 \label{eq:exact_cross}.
\end{equation}
This equation admits an approximate closed-form solution
\begin{equation}
R_{\mathrm{cross}} \approx -C \left(\frac{\kappa_2 R_2^*}{2 P_2^*}\right) - \frac{2 B_1}{\ln 2} W_{-1}(Z), \label{eq:approx_cross}
\end{equation}
where $W_{-1}(\cdot)$ is the lower branch of the Lambert W function, $C = \frac{2 P_{1,\mathrm{bp}}}{\kappa_1} + \mu_2 - \left(\frac{2 P_2^*}{\kappa_2 R_2^*}\right) R_{\mathrm{bp}}$, and the argument $Z = - \frac{\kappa_2 R_2^* \ln 2}{2 P_2^* B_1 \kappa_1 \sqrt{c_1}} \exp\left( - \frac{C \kappa_2 R_2^* \ln 2}{4 P_2^* B_1} \right)$.
\end{theorem}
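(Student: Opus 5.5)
The plan is to compare the two architectures by equating their average power consumption at the same rate $R$, and then approximate the resulting transcendental equation so that it fits the Lambert-W form $x e^{x}=Z$.

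\textbf{Exact equation.} The single-band baseline uses only Band~1 at $\alpha_1=1$ with the power in \eqref{eq:region2}. Its PC is therefore $\frac{2}{\kappa_1\sqrt{c_1}}\sqrt{2^{R/B_1}-1}+\mu_1$. The dual-band architecture operates in Region~3 by Theorem~\ref{theorem1}. Its PC is
\[
\frac{2P_{1,\mathrm{bp}}}{\kappa_1}+\mu_1+\frac{R-R_{\mathrm{bp}}}{R_2^*}\frac{2P_2^*}{\kappa_2}+\mu_2 ,
\]
where $\alpha_2=(R-R_{\mathrm{bp}})/R_2^*$ and $R_2^*=B_2\log_2(u^*)$. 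Setting the two expressions equal cancels $\mu_1$ and gives \eqref{eq:exact_cross}.

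To justify calling the solution a crossover, I will argue the following. The left side is convex and grows exponentially in $R$, while the right side is affine. At $R=R_{\mathrm{bp}}$ the right side exceeds the left side by $\mu_2$. For large $R$ the left side dominates. So there is a first crossing, and beyond it the dual-band PC is strictly smaller, which means its EE is strictly higher at equal rate.

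\textbf{Closed-form approximation.} Write the right side as $aR+C$, with $a=\frac{2P_2^*}{\kappa_2R_2^*}$ and $C$ as defined in the statement. For rates well above $R_1^*$ we have $2^{R/B_1}\gg1$, so I approximate
\[
\sqrt{2^{R/B_1}-1}\approx 2^{R/(2B_1)}=e^{R\ln2/(2B_1)} .
\]
The equation then becomes $b\,e^{kR}=aR+C$, with $b=\frac{2}{\kappa_1\sqrt{c_1}}$ and $k=\frac{\ln 2}{2B_1}$.

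Next I substitute $x=-k(R+C/a)$. This turns the equation into $x e^{x}=-\frac{kb}{a}e^{-kC/a}$. After inserting the constants and simplifying, the right side should be the stated $Z$. Solving gives $R=-C/a-x/k$ with $x=W(Z)$. Since $Z<0$, the equation has two real branches. The $W_0$ root corresponds to a small or spurious intersection. The $W_{-1}$ root gives the larger $R$ at which the exponential overtakes the line, which is the relevant one. This yields \eqref{eq:approx_cross}, with $-C\frac{\kappa_2R_2^*}{2P_2^*}=-C/a$ and $-\frac{1}{k}W_{-1}=-\frac{2B_1}{\ln2}W_{-1}$.

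\textbf{Main obstacle.} The hardest part is the bookkeeping of constants, so that the exponent and prefactor in $Z$ match the statement exactly. For instance, $kC/a$ should reproduce $\frac{C\kappa_2R_2^*\ln2}{4P_2^*B_1}$. A second difficulty is justifying the branch choice. For that I will verify that $Z\in(-1/e,0)$ under the assumed regime, so that $W_{-1}$ is real, and that the $W_{-1}$ solution exceeds $R_{\mathrm{bp}}$. I will also note that the approximation error from dropping the $-1$ is small whenever $R_{\mathrm{cross}}$ is a few multiples of $B_1$ beyond $R_1^*$.
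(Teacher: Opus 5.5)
Your proposal is correct and follows the paper's proof step for step. You equate the Band~1-only PC with the Region~3 dual-band PC, cancel $\mu_1$, drop the $-1$ under the square root, and recast $b e^{kR}=aR+C$ as $x e^{x}=Z$ solved with $W_{-1}$; your constants $kb/a$ and $kC/a$ reproduce the paper's $Z$ exactly. Your added justifications (a unique crossing beyond $R_{\mathrm{bp}}$, and $W_{-1}$ as the larger root) go beyond the paper, which simply asserts the branch; note only that $\sqrt{2^{R/B_1}-1}$ is convex just for $R>B_1$ (satisfied since $R_{\mathrm{bp}}>R_1^*=B_1\log_2(u^*)$), and that, like the paper, you tacitly assume the crossing falls in Region~3, i.e.\ $R_{\mathrm{cross}}\le R_{\mathrm{sat}}$.
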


\begin{IEEEproof}
Equating the PC of the Band~1 only baseline to that of the proposed dual-band system operating in Region 3 (where Band~1 is fully active at $P_{1,\mathrm{bp}}$ and Band~2 is duty-cycling):
\begin{equation}
\frac{2}{\kappa_1}\sqrt{\frac{2^{R/B_1}-1}{c_1}} + \mu_1 = \frac{2 P_{1,\mathrm{bp}}}{\kappa_1} + \mu_1 + \mu_2 + \frac{R - R_{\mathrm{bp}}}{R_2^*} \left( \frac{2 P_2^*}{\kappa_2} \right) .
\end{equation}
Canceling the $\mu_1$ term yields \eqref{eq:exact_cross}. Because the data rate is exceptionally high in Region 3, we apply the approximation $2^{R/B_1} - 1 \approx 2^{R/B_1}$. This allows the left side of \eqref{eq:exact_cross} to be rewritten using the natural exponential, giving:
\begin{equation}
\frac{2}{\kappa_1 \sqrt{c_1}} \exp\left(\frac{R \ln 2}{2 B_1}\right) \approx  \frac{2 P_{1,\mathrm{bp}}}{\kappa_1} + \mu_2 + \frac{R - R_{\mathrm{bp}} }{ R_2^*}\left(\frac{2 P_2^*}{\kappa_2}\right) .\label{eq:approx_exp}
\end{equation}

To analytically isolate $R$, we substitute the constant $C$ into \eqref{eq:approx_exp} and manipulate the equation into the canonical form ($X = Y e^Y$), giving:
\begin{align} \label{eq:lambert_form}
&- \left( \frac{\kappa_2 R_2^* \ln 2}{2 P_2^* B_1 \kappa_1 \sqrt{c_1}} \right) \exp\left( - \frac{C \kappa_2 R_2^* \ln 2}{4 P_2^* B_1} \right) =  \\
-&\frac{\ln 2}{2 B_1}\left( R + C \frac{\kappa_2 R_2^*}{2 P_2^*} \right) \exp\left( -\frac{\ln 2}{2 B_1}\left( R + C \frac{\kappa_2 R_2^*}{2 P_2^*} \right) \right). \nonumber
\end{align}
Applying the secondary branch of the Lambert W function ($W_{-1}$) to the negative resulting argument yields:
\begin{align}
W_{-1}&\left( - \frac{\kappa_2 R_2^* \ln 2}{2 P_2^* B_1 \kappa_1 \sqrt{c_1}} \exp\left( - \frac{C \kappa_2 R_2^* \ln 2}{4 P_2^* B_1} \right) \right) = \\
&-\frac{\ln 2}{2 B_1}\left( R + C \frac{\kappa_2 R_2^*}{2 P_2^*} \right). \nonumber
\end{align}

Finally, algebraically solving for $R$ yields the approximate closed-form threshold $R_\mathrm{cross}$ presented in \eqref{eq:approx_cross}.
\end{IEEEproof}

This theorem determines the exact rate where a dual-band BS outperforms a single-band BS from an EE viewpoint. Fig.~\ref{fig:dualbandEE} compares the proposed system's EE against a Band~1-only baseline and a heuristic bandwidth-proportional split. At low rates, passive circuit powers ($\mu_1, \mu_2$) dominate the EE. For rates lower than $R_\mathrm{cross}$, the baseline ``Band~1 Only'' system is the most energy efficient. But as the required rate increases, it suffers from capacity saturation, causing its EE to degrade rapidly. The proposed dual-band architecture avoids this by reallocating traffic to Band~2, making the dual-band setup more efficient in high-rate scenarios. Fig.~\ref{fig:dualbandEE} also evaluates a $10$~dB loss in $\beta_2$, representing physical blockages or imperfect CSI. This attenuation reduces Band~2's EE, shifting $R_{\mathrm{bp}}$ and $R_{\mathrm{cross}}$ rightward, forcing the BS to rely on Band~1 longer. The validity of Theorem~\ref{thm:activation_threshold} is also confirmed by Fig.~\ref{fig:dualbandEE}. The exact numerical crossing point $R_\mathrm{cross}$ coincides perfectly with the intersection of the Band~1 only and proposed dual-band curves. Furthermore, the marker for the approximate closed-form threshold \eqref{eq:approx_cross} aligns tightly with the exact solution.

\begin{figure}[t!]
	\centering \vspace{-2mm}
	\includegraphics[width=\columnwidth]{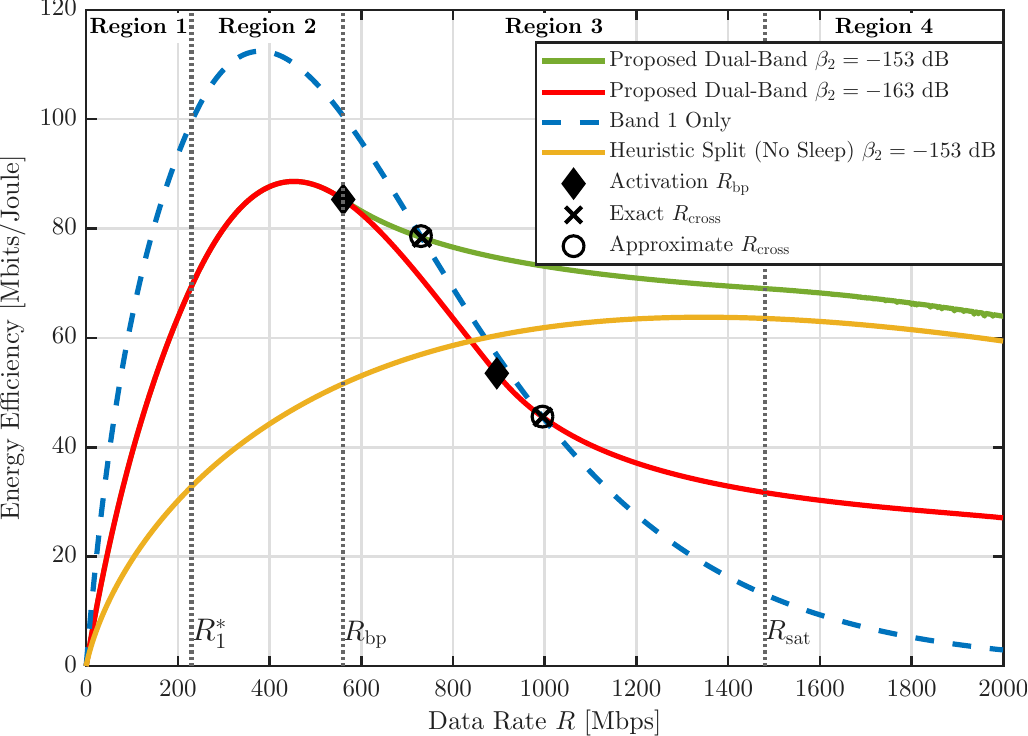}
	\caption{The EE of the proposed dual band BS for different rate requirements compared to a heuristic split and a single low band.}
	\label{fig:dualbandEE} \vspace{-2mm}
\end{figure}

\section{Conclusion}

In this paper, we investigated the fundamental EE limits of a dual-band BS combining a coverage-oriented sub-6 GHz carrier with a capacity-oriented sub-THz carrier. We demonstrated analytically that optimal EE across the entire traffic spectrum requires the hardware configuration (power and antennas) and sleep mode (duty cycle) to be jointly optimized. Our analysis identified four distinct operational regimes and yielded closed-form analytical expressions that dictate precisely when the sub-THz band should be activated, and how to allocate traffic across the bands in the dual-band case. Ultimately, by linking the load directly to the activity factor this framework provides mathematical guidelines for sleep-mode management and optimal operating points in future green multi-band systems.

\bibliographystyle{IEEEtran}

\bibliography{IEEEabrv,Referenser}

\begin{thebibliography}{10}
\providecommand{\url}[1]{#1}
\csname url@samestyle\endcsname
\providecommand{\newblock}{\relax}
\providecommand{\bibinfo}[2]{#2}
\providecommand{\BIBentrySTDinterwordspacing}{\spaceskip=0pt\relax}
\providecommand{\BIBentryALTinterwordstretchfactor}{4}
\providecommand{\BIBentryALTinterwordspacing}{\spaceskip=\fontdimen2\font plus
\BIBentryALTinterwordstretchfactor\fontdimen3\font minus
  \fontdimen4\font\relax}
\providecommand{\BIBforeignlanguage}[2]{{%
\expandafter\ifx\csname l@#1\endcsname\relax
\typeout{** WARNING: IEEEtran.bst: No hyphenation pattern has been}%
\typeout{** loaded for the language `#1'. Using the pattern for}%
\typeout{** the default language instead.}%
\else
\language=\csname l@#1\endcsname
\fi
#2}}
\providecommand{\BIBdecl}{\relax}
\BIBdecl

\bibitem{kolta2024measuring}
E.~Kolta and T.~Hatt, ``{Measuring Energy Efficiency of Mobile Networks
  2024},'' GSMA Intelligence, Tech. Rep., feb 2024.

\bibitem{ericssonmobilityreport2023}
\BIBentryALTinterwordspacing
Ericsson, ``{Ericsson Mobility Report June 2023},'' 2023. [Online]. Available:
  \url{https://www.ericsson.com/en/reports-and-papers/mobility-report/reports/}
\BIBentrySTDinterwordspacing

\bibitem{bjornson2017massive}
E.~Bj{\"o}rnson, J.~Hoydis, and L.~Sanguinetti, ``Massive {MIMO} networks:
  Spectral, energy, and hardware efficiency,'' \emph{Foundations and
  Trends{\textregistered} in Signal Processing}, vol.~11, no. 3-4, pp.
  154--655, 2017.

\bibitem{Rappaport2019a}
T.~S. Rappaport \emph{et~al.}, ``Wireless communications and applications above
  100 {GHz}: Opportunities and challenges for {6G} and beyond,'' \emph{IEEE
  Access}, vol.~7, pp. 78\,729--78\,757, 2019.

\bibitem{RIS-EE}
C.~Huang, A.~Zappone, G.~C. Alexandropoulos, M.~Debbah, and C.~Yuen,
  ``Reconfigurable intelligent surfaces for energy efficiency in wireless
  communication,'' \emph{IEEE Transactions on Wireless Communications},
  vol.~18, no.~8, pp. 4157--4170, 2019.

\bibitem{lopez2022survey}
D.~L{\'o}pez-P{\'e}rez, A.~De~Domenico, N.~Piovesan, G.~Xinli, H.~Bao,
  S.~Qitao, and M.~Debbah, ``A survey on {5G} radio access network energy
  efficiency: Massive {MIMO}, lean carrier design, sleep modes, and machine
  learning,'' \emph{IEEE Communications Surveys \& Tutorials}, vol.~24, no.~1,
  pp. 653--697, 2022.

\bibitem{Tombaz2014a}
S.~Tombaz, S.-w. Han, K.~W. Sung, and J.~Zander, ``Energy efficient network
  deployment with cell {DTX},'' \emph{IEEE Communications Letters}, vol.~18,
  no.~6, pp. 977--980, 2014.

\bibitem{rottenberg2024information}
F.~Rottenberg, ``Information-theoretic study of time-domain energy-saving
  techniques in radio access,'' \emph{IEEE Transactions on Green Communications
  and Networking}, vol.~9, no.~2, pp. 605--620, 2024.

\bibitem{peschiera2025optimizing}
E.~Peschiera, Y.~Agram, F.~Quitin, L.~Van~der Perre, and F.~Rottenberg, ``On
  optimizing time-, space-and power-domain energy-saving techniques for sub-6
  ghz massive mimo base stations,'' in \emph{2025 IEEE 26th International
  Workshop on Signal Processing and Artificial Intelligence for Wireless
  Communications (SPAWC)}.\hskip 1em plus 0.5em minus 0.4em\relax IEEE, 2025,
  pp. 1--5.

\bibitem{tataria20216g}
H.~Tataria, M.~Shafi, A.~F. Molisch, M.~Dohler, H.~Sj{\"o}land, and
  F.~Tufvesson, ``6g wireless systems: Vision, requirements, challenges,
  insights, and opportunities,'' \emph{Proceedings of the IEEE}, vol. 109,
  no.~7, pp. 1166--1199, 2021.

\bibitem{lopez2021energy}
D.~L{\'o}pez-P{\'e}rez, A.~De~Domenico, N.~Piovesan, X.~Geng, H.~Bao, and
  M.~Debbah, ``Energy efficiency of multi-carrier massive {MIMO} networks:
  Massive {MIMO} meets carrier aggregation,'' in \emph{2021 IEEE Global
  Communications Conference (GLOBECOM)}.\hskip 1em plus 0.5em minus 0.4em\relax
  IEEE, 2021, pp. 01--07.

\bibitem{enqvist2024fundamentals}
A.~Enqvist, {\"O}.~T. Demir, C.~Cavdar, and E.~Bj{\"o}rnson, ``Fundamentals of
  energy-efficient wireless links: Optimal ratios and scaling behaviors,'' in
  \emph{2024 IEEE 99th Vehicular Technology Conference (VTC2024-Spring)}.\hskip
  1em plus 0.5em minus 0.4em\relax IEEE, 2024, pp. 1--6.

\bibitem{peesapati2021analytical}
S.~K.~G. Peesapati, M.~Olsson, M.~Masoudi, S.~Andersson, and C.~Cavdar, ``An
  analytical energy performance evaluation methodology for 5g base stations,''
  in \emph{2021 17th International conference on wireless and mobile computing,
  networking and communications (WiMob)}.\hskip 1em plus 0.5em minus
  0.4em\relax IEEE, 2021, pp. 169--174.

\bibitem{Enqvist_Fundamentals_TBS}
A.~Enqvist, {\"O}.~T. Demir, C.~Cavdar, and E.~Bj{\"o}rnson, ``{Fundamentals of
  Energy-Efficient Hardware Configurations for Wireless Links with Sleep
  Modes},'' 2026, to be submitted, unpublished manuscript.

\end{thebibliography}

\end{document}